\newtheorem{theorem}{Theorem}
\newtheorem{lemma}[theorem]{Lemma}
\newtheorem{proposition}[theorem]{Proposition}
\theoremstyle{definition}
\newcommand{\Z}{{\mathbb Z}}
\newcommand{\C}{{\mathbb C}}
\begin{document}

\title[$A^{(1)}_n$ face model]
{Remarks on  
$\boldsymbol{A^{(1)}_n}$ face weights}
\author{Atsuo Kuniba}
%\email{atsuo.s.kuniba@gmail.com}
\address{Institute of Physics, University of Tokyo, 
Komaba, Tokyo 153-8902, Japan}
\maketitle

\begin{center}{\bf Abstract}\end{center}

Elementary proofs are presented for the factorization of the elliptic 
Boltzmann weights of the $A^{(1)}_n$ face model, and for the sum-to-1 property
in the trigonometric limit, at a special point of the spectral parameter.
They generalize recent results obtained in the context of 
the corresponding trigonometric vertex model.

\vspace{0.2cm}

\section{Introduction}
In the recent work \cite{KMMO}, 
the quantum $R$ matrix for the symmetric tensor representation of the 
Drinfeld-Jimbo quantum affine algebra $U_q(A^{(1)}_n)$ was revisited. 
A new factorized formula at 
a special value of the spectral parameter and 
a certain sum rule called sum-to-1 were established.
These properties have led to vertex models that can be interpreted as 
integrable Markov processes on one-dimensional lattice 
including several examples studied earlier \cite[Fig.1,2]{Kuan}.
In this note we report analogous properties of the 
Boltzmann weights for yet another class of 
solvable lattice models known as 
IRF (interaction round face) models \cite{Bax} or face models for short.
More specifically, we consider the elliptic fusion
$A^{(1)}_n$ face model corresponding to the 
symmetric tensor representation \cite{JMO,JKMO}.
For $n=1$, it reduces to \cite{ABF} and \cite{DJKMO} 
when the fusion degree is 1 and general, respectively.
There are restricted and unrestricted versions of the model.
The trigonometric case of the latter reduces to the $U_q(A^{(1)}_n)$ vertex model
when the site variables tend to infinity. See Proposition \ref{pr:knh}. 
In this sense Theorem \ref{th:s} and Theorem \ref{th:s1} given below, 
which are concerned with the unrestricted version,  
provide generalizations of \cite[Th.2]{KMMO} and \cite[eq.(30)]{KMMO}
so as to include finite site variables 
(and also to the elliptic case in the former).
In Section \ref{sec:discussion} 
we will also comment on the restricted version and difficulties 
to associate integrable stochastic models.

\section{Results}\label{sec:result}
Let 
$\theta_1(u)= \theta_1(u,p)= 2p^{\frac{1}{4}}\sin\pi u\prod_{k=1}^\infty
(1-2p^{2k} \cos 2\pi u +p^{4k})(1-p^{2k})$
be one of the Jacobi theta function $(|p|<1)$ enjoying 
the quasi-periodicity
\begin{align}\label{qp}
\theta_1(u+1; e^{\pi i \tau}) = -\theta_1(u; e^{\pi i \tau}),
\qquad
\theta_1(u+\tau; e^{\pi i \tau}) = -
e^{-\pi i \tau - 2\pi i u}\theta_1(u; e^{\pi i \tau}),
\end{align}
where $\mathrm{Im} \tau > 0$.
We set
\begin{align}\label{bu}
&[u]= \theta_1(\frac{u}{L},p),\quad
[u]_k = [u][u-1]\cdots [u-k+1],
\qquad
\left[{u \atop k}\right]=\frac{[u]_k}{[k]_k} \qquad(k \in \Z_{\ge 0}),
\end{align}
with a nonzero parameter $L$.
These are elliptic analogue of the $q$-factorial and the $q$-binomial:
\begin{align*}
(z)_m = (z;q)_m = \prod_{i=0}^{m-1}(1-z q^i),\qquad
\binom{m}{l}_q= \frac{(q)_m}{(q)_l(q)_{m-l}}.
\end{align*}

For $\alpha=(\alpha_1,\ldots, \alpha_k)$ with any $k$ 
we write $|\alpha| = \alpha_1+\cdots + \alpha_k$.
The relation $\beta \ge \gamma$ or equivalently $\gamma \le \beta$ 
means $\beta_i\ge \gamma_i$ for all $i$.

We take the set of local states as
$\tilde{\mathcal{P}}= \eta+ \Z^{n+1}$
with a generic $\eta \in \C^{n+1}$.
Given positive integers $l$ and $m$,
let $a,b,c,d \in \tilde{\mathcal{P}}$ be the elements
such that  
\begin{align}\label{abcd}
\alpha = d-a \in B_l ,\quad
\beta = c-d \in B_m,\quad
\gamma= c-b \in B_l,\quad
\delta= b-a \in B_m,
\end{align}
where $B_m$ is defined by 
\begin{align}\label{bm}
B_m = \{\alpha=(\alpha_1,\ldots, \alpha_{n+1}) \in \Z_{\ge 0}^{n+1}
\mid |\alpha| = m\}.
\end{align}
The relations (\ref{abcd})  imply
$\alpha+\beta= \gamma+\delta$.
The situation is summarized as

\begin{picture}(100,110)(-190,-30)

\put(0,50){\line(1,0){50}}
\put(0,0){\line(0,1){50}}
\put(50,0){\line(0,1){50}}
\put(0,0){\line(1,0){50}}

\put(25,-10){\vector(0,1){70}}
\put(-10,25){\vector(1,0){70}}

\put(-8,51){$a$}\put(52,51){$b$}
\put(-8,-8){$d$}\put(53,-8){$c$}

\put(23,64){$\delta$}
\put(-20,23){$\alpha$}\put(64,23){$\gamma$}
\put(22,-22){$\beta$}
\end{picture}

To the above configuration round a face
we assign a function 
of the spectral parameter $u$ 
called Boltzmann weight.
Its unnormalized version, denoted by
$\overline{W}_{l,m}\Bigl({a \;\;  b \atop d \;\; c}\Bigl|  u\Bigr)$,
is constructed from the $l=1$ case as follows:
\begin{align}
&\overline{W}_{l,m}\Bigl({a \;\;  b \atop d \;\; c}\Bigl|  u\Bigr)
= \sum \prod_{i=0}^{l-1}
\overline{W}_{1,m}
\Bigl({a^{(i)} \;\;\;\;\;\;  b^{(i)} \atop a^{(i+1)} \;\; b^{(i+1)}}\Bigl|  u-i\Bigr),
\label{wf}
\\
&\overline{W}_{1,m}
\Bigl({a \;\;  b \atop d \;\; c}\Bigl|  u\Bigr)
= \frac{[u+b_\nu-a_\mu]\prod_{j=1 \,(j\neq \mu)}^{n+1}
[b_\nu-a_j+1]}{\prod_{j=1}^{n+1} [c_\nu-b_j]}
\quad
(d = a + {\bf e}_\mu, \;\;  c = b + {\bf e}_\nu),
\nonumber
\end{align}
where 
${\bf e}_i = (0,\ldots, 0,\overset{i \,{\mathrm th}}{1},0,\ldots,0)$. 
In (\ref{wf}), $a^{(0)},\ldots, a^{(l)} \in \tilde{\mathcal{P}}$ 
is a path form $a^{(0)}=a$ to $a^{(l)}=d$ such that 
$a^{(i+1)}-a^{(i)} \in B_1\,(0 \le i <l)$.
The sum is taken over
$b^{(1)}, \ldots, b^{(l-1)} \in \tilde{\mathcal{P}}$ satisfying the conditions
$b^{(i+1)}-b^{(i)} \in B_1\,(0 \le i <l)$ with $b^{(0)}=b$ and $b^{(l)}=c$.
It is independent of the choice of
$a^{(1)}, \ldots, a^{(l-1)}$ (cf. \cite[Fig.2.4]{DJKMO}).
We understand that $\overline{W}_{l,m}
\Bigl({a \;\;  b \atop d \;\; c}\Bigl|  u\Bigr)=0$
unless (\ref{abcd}) is satisfied for some $\alpha,\beta, \gamma,\delta$. 

The normalized weight is defined by
\begin{align}
W_{l,m}\Bigl({a \;\;  b \atop d \;\; c}\Bigl|  u\Bigr)
&= \overline{W}_{l,m}\Bigl({a \;\;  b \atop d \;\; c}\Bigl|  u\Bigr)
\frac{[1]^l}{[l]_l}
\Bigl[{m \atop l}\Bigr]^{-1}.
\label{w89}
\end{align}
It satisfies \cite{JKMO} the (unrestricted) star-triangle relation 
(or dynamical Yang-Baxter equation) \cite{Bax}:
\begin{equation}\label{str}
\begin{split}
&\sum_g
W_{k,m}\Bigl({a \;\;  b \atop f \;\; g}\Bigl|  u\Bigr)
W_{l,m}\Bigl({f \;\;  g \atop e \;\; d}\Bigl|  v\Bigr)
W_{k,l}\Bigl({b \;\;  c \atop g \;\; d}\Bigl|  u-v\Bigr)\\
&=\sum_g
W_{k,l}\Bigl({a \;\;  g \atop f \;\; e}\Bigl|  u-v\Bigr)
W_{l,m}\Bigl({a \;\;  b \atop g \;\; c}\Bigl|  v\Bigr)
W_{k,m}\Bigl({g \;\;  c \atop e \;\; d}\Bigl|  u\Bigr),
\end{split}
\end{equation}
where the sum extends over $g \in \tilde{\mathcal{P}}$
giving nonzero weights.
Under the same setting (\ref{abcd}) as in (\ref{w89}), 
we introduce the product
\begin{align}
S_{l,m}\Bigl({a \;\;  b \atop d \;\; c}\Bigr)
= \Bigl[{m \atop l}\Bigr]^{-1}
\prod_{1 \le i, j \le n+1}
\frac{[c_i-d_j]_{c_i-b_i}}{[c_i-b_j]_{c_i-b_i}}.
\label{slm}
\end{align}
Note that $S_{l,m}\Bigl({a \;\;  b \atop d \;\; c}\Bigr) =0$
unless $d \le b$ because of
the factor $\prod_{i=1}^{n+1}[c_i-d_i]_{c_i-b_i}$.
The following result giving an explicit factorized formula of 
the weight $W_{l,m}$ at special value of the spectral parameter
is the elliptic face model analogue of 
\cite[Th.2]{KMMO}.

\begin{theorem}\label{th:s}
If $l \le m$,  
the following equality is valid:
\begin{align}\label{mina}
W_{l,m}\Bigl({a \;\;  b \atop d \;\; c}\Bigl|  u=0\Bigr)
= S_{l,m}\Bigl({a \;\;  b \atop d \;\; c}\Bigr).
\end{align}
\end{theorem}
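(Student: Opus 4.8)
The plan is to prove the factorization by induction on the fusion degree $l$, starting from the defining formula \eqref{wf}, and reducing everything to manipulations of theta functions via the quasi-periodicity \eqref{qp}. First I would dispose of the base case $l=1$: here \eqref{w89} gives $W_{1,m}=\overline{W}_{1,m}\cdot[1]/[m]$, and I must check that at $u=0$, with $d=a+{\bf e}_\mu$ and $c=b+{\bf e}_\nu$, the right-hand side $S_{1,m}$ reduces correctly. In $S_{1,m}$ each exponent $c_i-b_i$ equals $\delta_{i\nu}$, so the double product collapses to a single factor indexed by $i=\nu$, namely $\prod_{j}[c_\nu-d_j]/[c_\nu-b_j]$; combined with $[m/1]^{-1}=[1]/[m]$ this should match $\overline{W}_{1,m}|_{u=0}\cdot[1]/[m]$ once one observes that at $u=0$ the numerator $[u+b_\nu-a_\mu]\prod_{j\neq\mu}[b_\nu-a_j+1]$ equals $\prod_{j=1}^{n+1}[b_\nu-a_j+\delta_{j\mu}\cdot 0 + \ldots]$ — more precisely, $[b_\nu-a_\mu]\prod_{j\neq\mu}[b_\nu-a_j+1]=\prod_j[c_\nu - d_j]$ using $c_\nu = b_\nu$ (wait, $c=b+{\bf e}_\nu$ so $c_\nu=b_\nu+1$) and $d_j=a_j+\delta_{j\mu}$. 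So the numerator of $S_{1,m}$ is $\prod_j [b_\nu+1-a_j-\delta_{j\mu}]=[b_\nu+1-a_\mu-1]\prod_{j\ne\mu}[b_\nu+1-a_j]=[b_\nu-a_\mu]\prod_{j\ne\mu}[b_\nu-a_j+1]$, exactly the $u=0$ numerator of $\overline{W}_{1,m}$; and the denominators agree verbatim. This settles $l=1$.

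For the inductive step I would assume \eqref{mina} for fusion degree $l-1$ and use the recursion \eqref{wf} to express $W_{l,m}$ (via \eqref{w89}) in terms of a sum over the intermediate path variable of a product $\overline{W}_{1,m}(u)\cdot \overline{W}_{l-1,m}(u-1)$; at $u=0$ the second factor is evaluated at $u=-1$, not at $0$, so a genuine argument is needed rather than pure substitution. The key trick — and this is where the elliptic identities do the work — is to show that at $u=0$ only a single term in the path sum survives, namely the "greedy" path that goes straight from $a$ to $d$ with first step ${\bf e}_{\mu}$ for the smallest admissible index, and correspondingly a unique choice of $b^{(1)}$. Concretely, I expect that the factor $[u+b^{(1)}_\nu-a_\mu]$ appearing from the $i=0$ term, evaluated at $u=0$, forces vanishing unless the step from $b^{(0)}=b$ to $b^{(1)}$ is compatible with $d\le b$; the condition $l\le m$ is what guarantees such a compatible path exists and is unique. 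Having isolated the surviving term, the product telescopes: $\overline{W}_{1,m}(0)\cdot S_{l-1,m'}(\cdots)$ for an appropriate shifted configuration, and one checks that the shift of the $[m/l]$-type prefactors $[1]^l/[l]_l\cdot[m/l]^{-1}$ against $[1]^{l-1}/[l-1]_{l-1}\cdot[m/(l-1)]^{-1}$ produces exactly the extra factor needed to rebuild $S_{l,m}$ from $S_{l-1,m}$. This last step is a bookkeeping identity among $q$-binomial-like symbols $[u\,{\rm atop}\,k]$, provable directly from \eqref{bu} and \eqref{qp}.

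An alternative I would keep in reserve, in case the path-collapse argument proves delicate, is to bypass \eqref{wf} entirely and instead verify \eqref{mina} by checking that both sides satisfy the same system of recursion relations in the variables $(l,m)$ together with enough initial data — using the known fact that $\overline{W}_{l,m}$ is characterized by \eqref{wf} and is independent of the choice of $a^{(i)}$. One can also try to exploit the star-triangle relation \eqref{str} specialized at $u=0$ (or $u=v$) to derive a self-consistency equation that $S_{l,m}$ must satisfy; since the trigonometric vertex-model analogue in \cite{KMMO} was proved along related lines, transplanting that structure is plausible, though the dynamical (face) setting introduces the site-variable dependence that the vertex case lacks.

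I expect the main obstacle to be precisely the combinatorial claim that the path sum in \eqref{wf} collapses to one term at $u=0$: one must identify the unique surviving $(a^{(i)})$ and $(b^{(i)})$, verify that every other path contributes a factor containing $[0]=\theta_1(0,p)=0$, and handle the boundary case where $d\not\le b$ (both sides zero) cleanly. Controlling which theta arguments become nonpositive integers — and hence which factors $[\,\cdot\,]_k$ vanish — requires careful tracking of the inequalities coming from $\alpha\in B_l,\ \beta\in B_m$ and $l\le m$, and is the step where the elliptic quasi-periodicity \eqref{qp} must be invoked to rewrite products like $[u]_k$ under shifts of $u$ by integers. Once this collapse is established, the remaining algebra is routine manipulation of the elliptic factorials defined in \eqref{bu}.
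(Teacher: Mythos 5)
Your base case $l=1$ is correct and matches the paper. The inductive step, however, rests on a claim that is false: that at $u=0$ the path sum in (\ref{wf}) collapses to a single surviving term. The potentially vanishing numerator factors are of the form $[u-i+b^{(i+1)}_\nu-a^{(i)}_\mu]$, and since the site variables lie in $\tilde{\mathcal{P}}=\eta+\Z^{n+1}$ with generic $\eta$, such a factor can vanish only when the component indices coincide ($\nu=\mu$) \emph{and} the integer part of the argument is zero; generically this happens for none or few terms, and the sum retains several nonzero contributions. A concrete check: take $n=1$, $l=m=2$, $\delta=\gamma=(1,1)$, path $a^{(1)}=a+{\bf e}_1$; then both choices $b^{(1)}=b+{\bf e}_1$ and $b^{(1)}=b+{\bf e}_2$ give nonzero factors at $u=0$ (arguments $[1]$ and $[\eta_2-\eta_1+1]$), so there is no ``greedy path'' collapse, and no factor $[0]$ kills the second term. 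Consequently the telescoping you describe never gets started, and the ``routine bookkeeping'' you defer to is precisely where the real work lies.

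What the paper actually does is keep the full sum: writing $\overline{W}_{l+1,m}(0)=\sum_\nu \overline{W}_{l,m}\bigl({a\;\,b\atop d'\,c'}\bigl|0\bigr)\,\overline{W}_{1,m}\bigl({d'\,c'\atop d\;\,c}\bigl|-l\bigr)$ with $d'=d-{\bf e}_\mu$, $c'=c-{\bf e}_\nu$, applying the induction hypothesis to every summand, and reducing the desired factorization to the summation identity
\begin{align*}
\sum_\nu  [c_\nu - d_\mu -l]\prod_{i \neq \nu}
\frac{[c_i - d_\mu+1]}{[c_\nu-c_i]}
\prod_{j}[c_\nu-b_j]
= [l+1]\prod_i [b_i-d_\mu+1],
\end{align*}
which is a genuine elliptic (partial-fraction-type) identity, proved in the paper via a Liouville-type argument: the difference of the two sides is an entire function with the quasi-periodicity (\ref{jm}), it has the $n$ manifest zeros $z=-c_i$, and the zero-sum constraint of Lemma \ref{le:jm} would force $s\equiv 0$, a contradiction for generic parameters, so the difference vanishes identically. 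Your proposal contains no substitute for this identity; neither of your reserve strategies (characterization by recursions, or the star-triangle relation at $u=0$) is developed to the point of supplying one. To repair the argument you would need to abandon the single-term collapse and prove the displayed theta-function identity (or an equivalent), which is the actual core of the theorem.
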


\begin{proof}
We are to show
\begin{align}\label{kr}
\overline{W}_{l,m}\Bigl({a \;\;  b \atop d \;\; c}\Bigl|  0\Bigr)
= \frac{[l]_l}{[1]^l}
\prod_{i,j}
\frac{[c_i-d_j]_{c_i-b_i}}{[c_i-b_j]_{c_i-b_i}}.
\end{align}
Here and in what follows unless otherwise stated, 
the sums and products are taken always over $1,\ldots, n+1$
under the condition (if any) written explicitly. 
We invoke the induction on $l$.
It is straightforward to check (\ref{kr}) for $l=1$.
By the definition (\ref{wf}) the $l+1$ case is expressed as
\begin{align*}
\overline{W}_{l+1,m}\Bigl({a \;\;  b \atop d \;\; c}\Bigl|  0\Bigr)
= \sum_{\nu}
\overline{W}_{l,m}\Bigl({a \;\;\;  b \atop d' \;\; c'}\Bigl|  0\Bigr)
\overline{W}_{1,m}\Bigl({d' \,\;  c' \atop d \;\; c}\Bigl|  -l\Bigr)\qquad
(d'=d-{\bf e}_\mu, c'=c-{\bf e}_\nu)
\end{align*}
for some fixed $\mu \in [1,n+1]$.
Due to the induction hypothesis on 
$\overline{W}_{l,m}$,
the equality to be shown becomes
\begin{equation}\label{kr2}
\begin{split}
&\sum_\nu \frac{[l]_l}{[1]^l}
\Bigl(\prod_{i,j}
\frac{[c'_i-d'_j]_{c'_i-b_i}}{[c'_i-b_j]_{c'_i-b_i}}
\Bigr)
\frac{[-l+c'_\nu - d'_\mu]\prod_{k \neq \mu}[c'_\nu- d'_k+1]}
{\prod_k[c_\nu-c'_k]}
\\
&=
\frac{[l+1]_{l+1}}{[1]^{l+1}}
\prod_{i,j}
\frac{[c_i-d_j]_{c_i-b_i}}{[c_i-b_j]_{c_i-b_i}}.
\end{split}
\end{equation} 
After removing common factors using 
$c'_i = c_i - \delta_{i \nu}, d'_i = d_i - \delta_{i \mu}$,
one finds that (\ref{kr2}) is equivalent to 
\begin{align*}
\sum_\nu  [c_\nu - d_\mu -l]\prod_{i \neq \nu}
\frac{[c_i - d_\mu+1]}{[c_\nu-c_i]}
\prod_{j}[c_\nu-b_j] 
= [l+1]\prod_i [b_i-d_\mu+1]
\end{align*}
with $l$ determined by $l+1= \sum_j(c_j-b_j)$.
One can eliminate $d_\mu$ and rescale the 
variables by 
$(b_j,c_j) \rightarrow 
(Lb_j+d_\mu, Lc_j+d_\mu)$ for all $j$.
The resulting equality follows from Lemma \ref{le:id}.
\end{proof}

\begin{lemma}\label{le:id}
Let $b_1,\ldots, b_n, c_1, \ldots, c_n \in \C$ be generic and 
set $s = \sum_{i=1}^n(c_i-b_i)$.
Then for any $n \in \Z_{\ge 1}$ 
the following identity holds:
\begin{align*}
\sum_{i=1}^n\theta_1(z+c_i - s)
\prod_{j=1\,(j \neq i)}^n
\frac{\theta_1(z+c_j)}{\theta_1(c_i-c_j)}\prod_{j=1}^n\theta_1(c_i-b_j)
= \theta_1(s)\prod_{i=1}^n\theta_1(z+b_i).
\end{align*}
\end{lemma}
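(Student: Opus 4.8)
The plan is to prove Lemma~\ref{le:id} by viewing both sides as functions of the single complex variable $z$ and comparing them as sections of a line bundle on the elliptic curve $\C/(\Z+\Z\tau)$. Write $F(z)$ for the left-hand side and $G(z)=\theta_1(s)\prod_{i=1}^n\theta_1(z+b_i)$ for the right-hand side. The right-hand side is manifestly a theta function of order $n$ (it has $n$ zeros, at $z=-b_i$, in a fundamental domain), so the natural strategy is to show that $F$ lies in the same one-dimensional-up-to-scalar space once $n-1$ of its zeros are pinned down, and then match a single normalization constant.

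First I would record the quasi-periodicity of $F$. Each summand is, up to the $z$-independent factor $\prod_j\theta_1(c_i-b_j)/\prod_{j\neq i}\theta_1(c_i-c_j)$, a product $\theta_1(z+c_i-s)\prod_{j\neq i}\theta_1(z+c_j)$ of $n$ theta functions whose arguments sum to $nz+\bigl(\sum_j c_j\bigr)-s+(\text{terms independent of }i)$; crucially, $\sum_{j}c_j - s = \sum_j b_j$ is the \emph{same} for every $i$. Hence under $z\mapsto z+1$ every summand is multiplied by $(-1)^n$, and under $z\mapsto z+\tau$ every summand picks up the common factor $(-1)^n e^{-n\pi i\tau}e^{-2\pi i(nz+\sum_j b_j)}$ — exactly the same multiplier as $G(z)$. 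Therefore $F(z)/G(z)$ is elliptic. Next I would check that $F$ vanishes at the $n-1$ points $z=-b_2,\dots$ is \emph{not} the right count; instead the cleaner route is: $F(z)/G(z)$ is an elliptic function whose only possible poles are at the zeros of $G$, i.e. at $z=-b_i$. So it suffices to show $F(-b_i)=0$ for all but one $i$, or — more symmetrically — to exhibit enough zeros of $F$ to force $F/G$ to be constant, and then evaluate that constant at a convenient point.

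The key computation is therefore to evaluate $F$ at the special points that kill the poles. Setting $z=-c_k$ for a fixed $k$, every summand with $i\neq k$ contains the factor $\theta_1(z+c_k)=\theta_1(0)=0$, so $F(-c_k)$ collapses to its single $i=k$ term, which equals $\theta_1(c_k-s-c_k)\prod_{j\neq k}\frac{\theta_1(c_j-c_k)}{\theta_1(c_k-c_j)}\prod_j\theta_1(c_k-b_j) = \theta_1(-s)(-1)^{n-1}\prod_j\theta_1(c_k-b_j)$, using $\theta_1(-x)=-\theta_1(x)$ on the $n-1$ ratios. Meanwhile $G(-c_k)=\theta_1(s)\prod_j\theta_1(c_k-b_j)$, and since $\theta_1(-s)=-\theta_1(s)$ we get $F(-c_k)=(-1)^n G(-c_k)$ — this is suggestive but the values $z=-c_k$ are not the poles of $F/G$. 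The honest plan: show directly that $H(z):=F(z)-G(z)$ (which has the same quasi-periodicity as $G$, hence is again a theta function of order $\le n$) vanishes at the $n$ distinct points $z=-c_1,\dots,-c_n$ — wait, that gave $(-1)^n$, not $0$. So instead I would work with the combination dictated by that sign: when $n$ is even $H=F-G$ vanishes at all $z=-c_k$; when $n$ is odd one instead finds $F(-c_k)=-G(-c_k)$, so $F+G$ vanishes there. In either parity, a nonzero theta function of order $n$ with $n$ prescribed zeros $-c_1,\dots,-c_n$ in a fundamental domain is determined up to scalar, and having $n$ zeros already forces the order-$n$ section to be a scalar multiple of $\prod_k\theta_1(z+c_k)$; but $\sum(-c_k)\not\equiv \sum(-b_i)$ in general (the sums of zeros of the two order-$n$ theta functions must agree mod the lattice), so the only way out is $H\equiv 0$, i.e. $F=\pm G$ with the sign fixed by the parity check above — and re-examining the arithmetic, the genuine statement to verify is that $F$ and $G$ have matching quasi-periods \emph{and} $F$ has zeros at $-b_2,\dots,-b_n$ together with the correct value at one more point.

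The main obstacle, then, is organizing this zero-counting cleanly: I would (i) confirm the common quasi-periodicity so that $F/G$ is elliptic with at worst simple poles at $z\equiv -b_i$; (ii) show each such residue vanishes by substituting $z=-b_k$ and checking that the apparent pole of $F$ is absent — here one exploits that at $z=-b_k$ the summands recombine, using the three-term (Weierstrass) addition identity for $\theta_1$, namely $\theta_1(x+y)\theta_1(x-y)\theta_1(u+v)\theta_1(u-v)$-type relations, to telescope the sum; and (iii) conclude $F/G$ is a constant and fix it by sending $z\to\infty$ along the imaginary axis or, more safely in the elliptic setting, by comparing the leading terms at $z=-c_1$ as computed above, which yields the constant $1$ after the signs from $\theta_1(-s)=-\theta_1(s)$ and the $n-1$ antisymmetric ratios cancel. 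Step~(ii) — verifying the residue cancellation, equivalently that the naive order-$n$ numerator $F\cdot\prod_i\theta_1(z+b_i)^{-1}\cdot(\text{clearing})$ really has the $b_i$'s among its zeros — is where the real work lies, and it is exactly a Lagrange-interpolation/partial-fraction identity for theta functions; I expect it to reduce to the classical Frobenius or determinant (elliptic Cauchy) identity, or to follow by the same induction on $n$ used implicitly in \cite{DJKMO}, the base case $n=1$ being the trivial $\theta_1(z+c_1-s)\theta_1(c_1-b_1)=\theta_1(c_1-b_1)\theta_1(z+b_1)$ since $s=c_1-b_1$.
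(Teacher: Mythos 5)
Your plan contains exactly the right ingredients, but a sign slip derails it and the fallback you retreat to is not actually carried out. The slip is in the evaluation of the right-hand side at $z=-c_k$: you wrote $G(-c_k)=\theta_1(s)\prod_j\theta_1(c_k-b_j)$, but in fact
$G(-c_k)=\theta_1(s)\prod_{j}\theta_1(b_j-c_k)=(-1)^n\,\theta_1(s)\prod_j\theta_1(c_k-b_j)$,
since $\theta_1$ is odd. Your computation of $F(-c_k)=(-1)^n\theta_1(s)\prod_j\theta_1(c_k-b_j)$ is correct, so in fact $F(-c_k)=G(-c_k)$ for \emph{every} $n$; there is no parity dichotomy, no need to consider $F+G$, and no "$F=\pm G$" ambiguity. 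Once this is fixed, the argument you sketched in the middle of your proposal closes immediately and is precisely the paper's proof: set $f=F-G$; by the quasi-periodicity (\ref{qp}) it is an entire function with the multiplier of an order-$n$ theta function (in the notation of Lemma \ref{le:jm}, $B=\tfrac n2$, $A_2=n$, $A_1=\tfrac{n(1+\tau)}2+\sum_j b_j$); it vanishes at the $n$ points $-c_1,\dots,-c_n$; and by the sum-of-zeros constraint of Lemma \ref{le:jm} (Abel's theorem, proved there by integrating $zf'(z)/f(z)$ over a period parallelogram), a nonzero such $f$ would force $-\sum_k c_k\equiv-\sum_j b_j$, i.e. $s\equiv0$ mod $\Z+\Z\tau$, contradicting genericity. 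Hence $f\equiv0$.

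Because of the sign error you abandon this route and instead propose to show that $F/G$ has no poles at $z\equiv-b_k$, i.e. that $F(-b_k)=0$, by a telescoping via the three-term (Weierstrass) addition identity or by appeal to an elliptic Cauchy/Frobenius determinant identity. As written this is a genuine gap: you never perform the telescoping, and proving $F(-b_k)=0$ directly is essentially equivalent in difficulty to the lemma itself (it is the lemma evaluated at those points), so the step begs the question rather than reducing it. Correct the sign at $z=-c_k$, keep your quasi-periodicity computation, and invoke the sum-of-zeros lemma, and your argument becomes complete and coincides with the paper's.
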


\begin{proof}
Denote the $\text{LHS}-\text{RHS}$ by $f(z)$.
From (\ref{qp}) we see that $f(z)$ satisfies 
(\ref{jm}) with $B=\frac{n}{2}$,
$A_1 = \frac{n(1+\tau)}{2}+ \sum_{j=1}^nb_j$ and $A_2=n$.
Moreover it is easily checked that $f(z)$ possesses zeros 
at $z = -c_1,\ldots, -c_n$.
Therefore Lemma \ref{le:jm} claims
$-(c_1+\cdots + c_n) - (B \tau + \frac{1}{2}A_2-A_1) \equiv 0$
mod $\Z+\Z\tau$. 
But this gives $s\equiv0$ which is a contradiction since
$b_j, c_j$ can be arbitrary. 
Therefore $f(z)$ must vanish identically.
\end{proof}

\begin{lemma}\label{le:jm}
Let $\mathrm {Im} \tau >0$.
Suppose an entire function $f(z) \not\equiv 0$ satisfies the quasi-periodicity 
\begin{align}\label{jm}
f(z+1) = e^{-2\pi i B}f(z),\qquad
f(z+\tau) = e^{-2\pi i (A_1+A_2z)}f(z).
\end{align}
Then $A_2 \in \Z_{\ge 0}$ holds and 
$f(z)$ has exactly $A_2$ zeros $z_1, \ldots, z_{A_2} \!\mod \Z+\Z\tau$.
Moreover 
$z_1+ \cdots + z_{A_2} \equiv 
B \tau + \frac{1}{2}A_2-A_1\,\mod \Z+ \Z\tau$ holds. 
\end{lemma}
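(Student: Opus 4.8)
The plan is to run the classical argument–principle computation on a fundamental period parallelogram, exactly as one proves the valence formula for theta functions. First I would observe that, since the multipliers $e^{-2\pi i B}$ and $e^{-2\pi i(A_1+A_2z)}$ in (\ref{jm}) never vanish, the zero set of $f$ is invariant under $z\mapsto z+1$ and $z\mapsto z+\tau$; hence it makes sense to count zeros modulo $\Z+\Z\tau$, and one may choose a base point $z_0$ so that the boundary $\partial D$ of the parallelogram $D$ with vertices $z_0,\,z_0+1,\,z_0+1+\tau,\,z_0+\tau$ contains no zero of $f$.

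Next, by the argument principle the number of zeros of $f$ in $D$ (with multiplicity) equals $\tfrac{1}{2\pi i}\oint_{\partial D}\frac{f'(z)}{f(z)}\,dz$. Logarithmic differentiation of (\ref{jm}) gives $\frac{f'(z+1)}{f(z+1)}=\frac{f'(z)}{f(z)}$ and $\frac{f'(z+\tau)}{f(z+\tau)}=\frac{f'(z)}{f(z)}-2\pi i A_2$. Pairing the two vertical sides of $\partial D$ makes their contributions cancel, while pairing the two horizontal sides leaves $\int_{z_0}^{z_0+1}2\pi i A_2\,dz=2\pi i A_2$. Therefore the zero count equals $A_2$, which in particular forces $A_2\in\Z_{\ge 0}$ and yields the first two assertions.

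For the sum of the zeros I would instead evaluate $\tfrac{1}{2\pi i}\oint_{\partial D} z\,\frac{f'(z)}{f(z)}\,dz$, which by the residue theorem equals $z_1+\cdots+z_{A_2}$ (the actual zeros, with multiplicity). Using the same substitutions $z\mapsto z+1$ and $z\mapsto z+\tau$ on the paired sides, the horizontal pair contributes $-\tau\int_{z_0}^{z_0+1}\frac{f'}{f}\,dz+2\pi i A_2\int_{z_0}^{z_0+1}(z+\tau)\,dz$ and the vertical pair contributes $\int_{z_0}^{z_0+\tau}\frac{f'}{f}\,dz$. Each remaining line integral of $f'/f$ along a single period is the logarithm of the corresponding multiplier, hence equals $-2\pi i B$ resp.\ $-2\pi i(A_1+A_2 z_0)$ up to an additive integer multiple of $2\pi i$, while the elementary integral $\int_{z_0}^{z_0+1}(z+\tau)\,dz=z_0+\tau+\tfrac12$ supplies the $\tfrac12 A_2$. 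Collecting terms, and noting that the ambiguous integers together with the stray $A_2\tau$ and the $-2\pi i A_2 z_0$ (which cancels) all lie in $2\pi i(\Z+\Z\tau)$, one obtains $z_1+\cdots+z_{A_2}\equiv B\tau+\tfrac12 A_2-A_1 \pmod{\Z+\Z\tau}$.

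The one point requiring care — the ``hard part'' such as it is — is the bookkeeping of the branch of the logarithm in the single-period integrals $\int\frac{f'}{f}$ and the verification that every spurious $\tau$- or $1$-shift introduced by the substitutions can indeed be absorbed into $\Z+\Z\tau$; once the base point $z_0$ is fixed generically and these ambiguities are tracked honestly, the computation is entirely mechanical.
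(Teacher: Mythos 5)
Your proposal is correct and follows the same route as the paper: the paper's proof likewise applies the Cauchy/argument principle on a period parallelogram free of zeros on its boundary, computing $\oint \frac{f'}{f}\frac{dz}{2\pi i}$ for the zero count $A_2$ and $\oint \frac{zf'}{f}\frac{dz}{2\pi i}$ for the zero sum. Your version simply spells out the side-pairing and the $2\pi i\Z$ branch ambiguities that the paper leaves implicit.
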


\begin{proof}
Let $C$ be a period rectangle $(\xi, \xi+1, \xi+1+\tau, \xi+\tau)$ 
on which there is no zero of $f(z)$.
From the Cauchy theorem the number of zeros of $f(z)$ in $C$ is equal to  
$\int_C \frac{f'(z)}{f(z)}\frac{dz}{2\pi i}$.
Calculating the integral by using (\ref{jm}) one gets $A_2$.
The latter assertion can be shown similarly by considering the 
integral $\int_C \frac{zf'(z)}{f(z)}\frac{dz}{2\pi i}$.
\end{proof}

From Theorem \ref{th:s} and (\ref{str})  it follows that 
$S_{l,m}\Bigl({a \;\;  b \atop d \;\; c}\Bigr)$ also satisfies the 
(unrestricted) star-triangle relation (\ref{str}) without spectral parameter.
The discrepancy of the factorizing 
points $u=0$ in (\ref{mina}) and ``$u=l-m$" in \cite[Th.2]{KMMO} 
is merely due to a conventional difference in defining the face and the vertex weights.

Since (\ref{w89}) and (\ref{slm}) are homogeneous of degree 0 
in the symbol $[\cdots]$, the trigonometric limit 
$p \rightarrow 0$ may be understood as replacing (\ref{bu}) by   
$[u]=q^{u/2}-q^{-u/2}$ with generic $q=\exp\frac{2\pi i}{L}$.
Under this prescription the elliptic binomial 
$\left[{m \atop l}\right]$ from (\ref{bu}) is replaced by
$q^{l(l-m)/2}\binom{m}{l}_{q}$, therefore the trigonometric limit of 
(\ref{slm}) becomes
\begin{align}\label{st}
S_{l,m}\Bigl({a \;\;  b \atop d \;\; c}\Bigr)_{\mathrm{trig}}
= \binom{m}{l}_q ^{-1}\prod_{1 \le i,j \le n+1}
\frac{(q^{b_i-d_j+1})_{c_i-b_i}}
{(q^{b_i-b_j+1})_{c_i-b_i}}.
\end{align}
The following result is a trigonometric face model analogue of 
\cite[Th.6]{KMMO}.

\begin{theorem}\label{th:s1}
Suppose $l\le m$. 
Then the sum-to-1 holds in the trigonometric case:
\begin{align}
\sum_{b}S_{l,m}\Bigl({a \;\;  b \atop d \;\; c}\Bigr)_{\mathrm{trig}}= 1,
\label{s1}
\end{align}
where the sum runs over those $b$ satisfying 
$c-d \in B_m$ and $d-a \in B_l$.
\end{theorem}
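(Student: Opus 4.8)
The plan is to reduce \eqref{s1} to a telescoping identity in a single variable and then iterate. First I would fix the total ``shapes'' $\alpha=d-a\in B_l$ and $\beta+\delta$-type data, and observe that the admissible $b$ are parametrized by $\delta=b-a\in B_m$ with $b\ge d$ (equivalently $d\le b$), the latter being exactly the support condition already noted after \eqref{slm}. Writing $S_{l,m}(\cdots)_{\mathrm{trig}}$ out via \eqref{st}, the dependence on $b$ is through the products $\prod_{i,j}(q^{b_i-d_j+1})_{c_i-b_i}/(q^{b_i-b_j+1})_{c_i-b_i}$, and since $c_i-b_i=\gamma_i$ with $|\gamma|=l$ fixed by $c-b\in B_l$, the exponents $c_i-b_i$ range over the compositions of $l$. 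So the sum over $b$ is really a sum over a pair of compositions $(\gamma,\delta)$ tied by $\gamma+\delta=\alpha+\delta_0$ for the appropriate fixed total; I would make the change of summation variable to $\gamma$, i.e. sum over $c_i-b_i=\gamma_i\ge 0$ with $\sum\gamma_i=l$ and $\gamma_i\le c_i-d_i$ (to keep $b\ge d$).

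The core step is an $(n{+}1)$-fold $q$-Vandermonde / $q$-Saalschütz evaluation. I would prove the identity by induction on $n$, peeling off one coordinate at a time: sum first over $\gamma_{n+1}$ (equivalently $b_{n+1}$) with all other $b_j$ held fixed, recognizing the resulting single-variable sum as a terminating ${}_2\phi_1$ or ${}_3\phi_2$ that evaluates by the $q$-Vandermonde sum $\sum_{k}\binom{N}{k}_q$-type formula, after the $q$-binomial prefactor $\binom{m}{l}_q^{-1}$ is split appropriately using $\binom{m}{l}_q=\binom{m}{\gamma_{n+1}}_q\binom{m-\gamma_{n+1}}{\,l-\gamma_{n+1}}_q$ and the Pochhammer factors $(q^{b_i-d_{n+1}+1})_{\gamma_i}$ and $(q^{b_{n+1}-b_j+1})_{\gamma_{n+1}}$ are regrouped. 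The outcome of this single summation should be a telescoping collapse that lowers $n$ to $n-1$ while shifting $m\mapsto m$, $l\mapsto l-\gamma_{n+1}$ and absorbing $d_{n+1}$; iterating down to $n=0$, where $B_l$ is a single point and $S_{l,m}=\binom{m}{l}_q^{-1}\binom{m}{l}_q=1$ trivially, finishes the proof.

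Alternatively — and this may be cleaner to write — I would deduce \eqref{s1} directly from Theorem \ref{th:s}. Since $S_{l,m}(\cdots)=W_{l,m}(\cdots|u=0)$ in the trigonometric limit, the sum $\sum_b S_{l,m}\bigl({a\;b\atop d\;c}\bigr)_{\mathrm{trig}}$ is $\sum_b W_{l,m}\bigl({a\;b\atop d\;c}\bigl|0\bigr)$, and this kind of ``row sum at the special point'' is controlled by specializing the star-triangle relation \eqref{str} or by using the known first-inversion (unitarity) relation for the face weights together with the behaviour of $W_{l,m}$ as $u\to\infty$ or at a second distinguished point; concretely one picks the degeneration of \eqref{str} in which one of the three weights becomes a Kronecker delta, so that two of the three sums trivialize and the remaining identity reads precisely $\sum_b W_{l,m}(\cdots|0)=W(\cdots|0)$ for a $1\times1$ corner, i.e. $=1$. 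The normalization factor $[1]^l/[l]_l\cdot\bigl[{m\atop l}\bigr]^{-1}$ in \eqref{w89} is exactly what makes the right-hand side equal to $1$ rather than some ratio of theta-products.

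The main obstacle I anticipate is the bookkeeping in the inductive single-variable summation: tracking how the prefactor $\binom{m}{l}_q^{-1}$ and the double product over $i,j$ split when one coordinate is summed out, and checking that the residual object after summation is again of the form \eqref{st} with the parameters shifted as claimed (in particular that no spurious $q$-powers survive — these must cancel because \eqref{st} is homogeneous of degree $0$ in $[\cdot]$). Verifying the $\gamma_{n+1}\le c_{n+1}-d_{n+1}$ truncation matches the terminating condition of the relevant $q$-hypergeometric sum, and that the $n=0$ base case is genuinely trivial, are the remaining routine checks.
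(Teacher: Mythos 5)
Both of your routes have genuine gaps. For the first (induction on $n$ via a coordinate-by-coordinate $q$-Vandermonde summation): after passing to the variable $\gamma=c-b\in B_l$, you cannot ``sum over $b_{n+1}$ with all other $b_j$ held fixed'', because the constraint $|\gamma|=l$ determines $b_{n+1}$ once the other coordinates are fixed; the only available organization is to fix $\gamma_{n+1}=k$ and sum the remaining coordinates over $|\gamma'|=l-k$. But then the inner sum is \emph{not} of the form \eqref{st} with shifted parameters: the cross factors with index $n+1$, e.g.\ $(q^{c_{n+1,j}-k+\gamma_j+1})_{k}$ in the denominator of \eqref{hna}, carry the remaining summation variables $\gamma_j$ inside their arguments, so the claimed ``telescoping collapse to the same identity with $n-1$'' is precisely the unproved core of the theorem, not bookkeeping. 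This is exactly why the paper does not sum coordinates hypergeometrically; it first generalizes \eqref{hna} to the identity \eqref{yme}/\eqref{sw} with free parameters $w_i,z_i$, and proves that by a symmetry-plus-finiteness argument on the coefficients $C_{i,j}(z_1,z_2)$ (showing $F_n(w_1,w_2,\ldots)=F_n(1,w_1w_2,\ldots)=F_{n-1}(w_1w_2,\ldots)$ and iterating down to $F_1$), rather than by $q$-Saalsch\"utz. If you want to salvage your route you must at least enlarge the identity with extra free parameters so that the partial summation stays inside the family — which is in effect what \eqref{yme} does.

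Your fallback argument cannot work as stated, for a structural reason: Theorem \ref{th:s}, the normalization \eqref{w89}, and the star-triangle relation \eqref{str} all hold in the elliptic case, yet the paper notes explicitly that sum-to-1 \eqref{s1} fails elliptically. Hence no degeneration of \eqref{str} (Kronecker-delta weight, unitarity, etc.) combined with \eqref{mina} alone can yield \eqref{s1}; any correct proof must use the trigonometric specialization in an essential way (here, the fact that \eqref{st} reduces to a genuinely $q$-hypergeometric object independent of $a$), and your sketch supplies no such ingredient beyond a vague appeal to the $u\to\infty$ behaviour.
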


\begin{proof}
The relation (\ref{s1}) is equivalent to 
\begin{align}\label{hna}
\binom{m}{l}_q = \sum_{\gamma \in B_l, \gamma \le \beta}
\prod_{1 \le i,j \le n+1}
\frac{(q^{c_{ij}-\gamma_i+\beta_j+1})_{\gamma_i}}
{(q^{c_{ij}-\gamma_i+\gamma_j+1})_{\gamma_i}} 
\qquad 
(c_{ij}= c_i-c_j)
\end{align}
for any fixed $\beta=(\beta_1,\ldots, \beta_{n+1}) \in B_m$, $l \le m$ and the 
parameters $c_1,\ldots, c_{n+1}$, 
where the sum is taken over $\gamma \in B_l$ (\ref{bm})  
under the constraint $\gamma \le \beta$.
In fact we are going to show 
\begin{align}\label{yme}
\frac{(w_1^{-1}\ldots w_n^{-1}q^{-l+1})_l}{(q)_l}
=\sum_{|\gamma|=l}\prod_{1 \le i,j \le n}
\frac{\bigl(q^{-\gamma_i+1}z_i/(z_jw_j)\bigr)_{\gamma_i}}
{\bigl(q^{\gamma_j-\gamma_i+1}z_i/z_j\bigr)_{\gamma_i}}
\qquad (l \in \Z_{\ge 0}),
\end{align}
where the sum is over $\gamma \in \Z_{\ge 0}^{n}$
such that $|\gamma|=l$, and 
$w_1,\ldots, w_n, z_1,\ldots, z_n$ are arbitrary parameters.
The relation (\ref{hna}) is deduced from 
$(\ref{yme})|_{n\rightarrow n+1}$ 
by setting $z_i = q^{c_i}, w_i = q^{-\beta_i}$ 
and specializing $\beta_i$'s to  
nonnegative integers.
In particular, the constraint $\gamma \le \beta$ automatically arises from the $i=j$ 
factor $\prod_{i=1}^n(q^{-\gamma_i+1+\beta_i})_{\gamma_i}$ 
in the numerator.
To show (\ref{yme}) we rewrite it slightly as
\begin{align}\label{sw}
q^{\frac{l^2}{2}}\frac{(w_1\ldots w_n)_l}{(q)_l}
= \sum_{|\gamma| = l}\prod_{i=1}^n 
q^{\frac{\gamma_i^2}{2}}\frac{(w_i)_{\gamma_i}}{(q)_{\gamma_i}}
\prod_{1 \le i\neq j \le n}
\frac{(z_jw_j/z_i)_{\gamma_i}}{(q^{-\gamma_j}z_j/z_i)_{\gamma_i}}.
\end{align} 
Denote the RHS by $F_n(w_1, \ldots, w_n|z_1, \ldots, z_n)$.
We will suppress a part of the arguments when they are kept unchanged
in the formulas.
It is easy to see 
\[
F_n(w_1,w_2|z_1,z_2)=
F(w_2,w_1|z_2,z_1)
=F_n(\frac{z_2w_2}{z_1}, \frac{z_1w_1}{z_2}|z_1,z_2).
\]
Thus the coefficients in the expansion
$F_n(w_1,w_2|z_1, z_2)= \sum_{0 \le i,j \le l}C_{i,j}(z_1,z_2)w_1^iw_2^j$ 
are rational functions in $z_1,\ldots, z_n$ obeying
$C_{i,j}(z_1,z_2) = C_{j,i}(z_2,z_1) 
= \bigl(\frac{z_1}{z_2}\big)^{i-j}C_{j,i}(z_1,z_2)$.
On the other hand from the explicit formula (\ref{sw}), 
one also finds that 
any $C_{i,j}(z_1,z_2)$ remains finite in the either limit
$\frac{z_1}{z_2}, \frac{z_2}{z_i}\rightarrow \infty$
or  
$\frac{z_1}{z_2}, \frac{z_2}{z_i}\rightarrow 0$ 
for $i\ge 3$.
It follows that $C_{i,j}(z_1,z_2)=0$ unless $i=j$, hence 
$F_n(w_1, w_2,\ldots, w_n |z_1,\ldots, z_n)
=F_n(1,w_1w_2, w_3,\ldots, w_n |z_1, \ldots, z_n )$.
Moreover it is easily seen 
$F_n(1,w_1w_2, w_3,\ldots, w_n |z_1,z_2,\ldots, z_n )
=F_{n-1}(w_1w_2, w_3,\ldots, w_n |z_2,\ldots, z_n )$.
Repeating this we reach $F_1(w_1\cdots w_n|z_n)$ giving the 
LHS of (\ref{sw}).
\end{proof}

We note that the sum-to-1 (\ref{s1}) does not hold in the elliptic case.
Remember that our local states are taken from 
$\tilde{\mathcal{P}} = \eta + \Z^{n+1}$ with a generic $\eta \in \C^{n+1}$.
So we set $a= \eta + {\tilde a}$ with ${\tilde a} \in \Z^{n+1}$ etc
in (\ref{bm}), and assume that it is valid also for 
${\tilde a}, {\tilde b}, {\tilde c}, {\tilde d}$.
It is easy to check
\begin{proposition}\label{pr:knh}
Assume $l \le m$ and $|q|<1$.
Then the following equality holds:
\begin{align}\label{konoha}
\lim_{\eta \rightarrow \infty}S_{l,m}
\Bigl({\eta + {\tilde a} \;\;  \eta + {\tilde b} 
\atop \eta + {\tilde d} \;\; \eta + {\tilde c}}\Bigr)_{\mathrm{trig}}
= q^{\sum_{i<j}(\beta_i-\gamma_i)\gamma_j}
\binom{m}{l}_q^{-1}\prod_{i=1}^{n+1}
\binom{\beta_i}{\gamma_i}_q,
\end{align}
where the limit means $\eta_i-\eta_{i+1} \rightarrow \infty$ 
for all $1 \le i \le n$, and the RHS is zero unless 
$0 \le \gamma_i \le \beta_i,\,\forall i$.
\end{proposition}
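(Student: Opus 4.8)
The plan is to read the limit off directly from the explicit product (\ref{st}), analysing each factor $\dfrac{(q^{b_i-d_j+1})_{c_i-b_i}}{(q^{b_i-b_j+1})_{c_i-b_i}}$ according to the sign of $i-j$. Writing $a=\eta+\tilde a$ and so on, one first records that $c_i-b_i=\tilde c_i-\tilde b_i=\gamma_i$, $b_i-d_i=\tilde b_i-\tilde d_i=\beta_i-\gamma_i$, $b_i-d_j=(\eta_i-\eta_j)+(\tilde b_i-\tilde d_j)$ and $b_i-b_j=(\eta_i-\eta_j)+(\tilde b_i-\tilde b_j)$, so that only the off-diagonal factors feel $\eta$, and they do so only through $\eta_i-\eta_j$. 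Since the limit imposes $\eta_i-\eta_{i+1}\to\infty$ for $1\le i\le n$, we get $\eta_i-\eta_j\to+\infty$ whenever $i<j$ and $\eta_i-\eta_j\to-\infty$ whenever $i>j$; I would therefore split the double product $\prod_{1\le i,j\le n+1}$ into the three ranges $i<j$, $i=j$, $i>j$ and handle them separately.

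For $i=j$ there is no $\eta$-dependence and the factor is $\dfrac{(q^{\beta_i-\gamma_i+1})_{\gamma_i}}{(q)_{\gamma_i}}=\binom{\beta_i}{\gamma_i}_q$, which vanishes (one Pochhammer factor degenerating to $1-q^0$) unless $0\le\gamma_i\le\beta_i$; this is the source of the support condition asserted in the proposition. For $i<j$ both $q^{b_i-d_j+1}$ and $q^{b_i-b_j+1}$ tend to $0$ because $|q|<1$, so numerator and denominator each tend to $1$ and the factor contributes $1$ in the limit.

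The case $i>j$ is the only one needing a short computation: now $q^{b_i-d_j+1}$ and $q^{b_i-b_j+1}$ blow up, and using the elementary asymptotics $(x)_k=\prod_{r=0}^{k-1}(1-xq^r)=(-1)^kq^{\binom{k}{2}}x^k(1+o(1))$ as $x\to\infty$ one sees that numerator $(q^{b_i-d_j+1})_{\gamma_i}$ and denominator $(q^{b_i-b_j+1})_{\gamma_i}$ carry the same sign $(-1)^{\gamma_i}$ and the same $q^{\binom{\gamma_i}{2}}$ factor, so their ratio tends to $q^{\gamma_i((b_i-d_j)-(b_i-b_j))}=q^{\gamma_i(b_j-d_j)}=q^{\gamma_i(\beta_j-\gamma_j)}$, again independent of $\eta$. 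Collecting the three ranges gives $\lim_{\eta\to\infty}S_{l,m}(\cdots)_{\mathrm{trig}}=\binom{m}{l}_q^{-1}\bigl(\prod_i\binom{\beta_i}{\gamma_i}_q\bigr)\prod_{i>j}q^{\gamma_i(\beta_j-\gamma_j)}$, and relabelling $i\leftrightarrow j$ in the last product rewrites its exponent as $\sum_{i<j}(\beta_i-\gamma_i)\gamma_j$, which is exactly (\ref{konoha}). The main obstacle is thus not conceptual but purely bookkeeping: one must align the signs $(-1)^{\gamma_i}$ and the $q^{\binom{\gamma_i}{2}}$ corrections between numerator and denominator so that they cancel cleanly, and confirm that every $\eta_i-\eta_j$ drops out of each ratio before the limit is taken, so that the limit genuinely exists and is finite even though the individual Pochhammer factors diverge.
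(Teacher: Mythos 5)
Your proof is correct: the factor-by-factor limit analysis of \eqref{st} (diagonal terms giving $\binom{\beta_i}{\gamma_i}_q$ and the support condition, $i<j$ terms tending to $1$, $i>j$ terms tending to $q^{\gamma_i(\beta_j-\gamma_j)}$ since the $\eta$-dependence cancels in the ratio) is exactly the elementary verification the paper has in mind when it states the proposition with ``It is easy to check.'' No gaps; the relabelling $i\leftrightarrow j$ at the end correctly produces the exponent $\sum_{i<j}(\beta_i-\gamma_i)\gamma_j$ in \eqref{konoha}.
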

The limit reduces the unrestricted trigonometric 
$A^{(1)}_n$ face model to the vertex model 
at a special value of the spectral parameter in the sense that
the RHS of $(\ref{konoha})|_{q\rightarrow q^2}$ reproduces 
\cite[eq.(23)]{KMMO} that was obtained as the special value of the 
quantum $R$ matrix associated with the symmetric tensor representation
of $U_q(A^{(1)}_n)$.

\section{Discussion}\label{sec:discussion}
Since the weights $W_{l,m}\Bigl({a \;\;  b \atop d \;\; c}\Bigl|  u\Bigr)$ 
remain unchanged by shifting $a,b,c,d \in \tilde{\mathcal{P}}$ 
by $\mathrm{const}\cdot(1,\ldots, 1)$,
we regard them as elements from 
$\mathcal{P}:=\tilde{\mathcal{P}}/\C(1,\ldots,1)$ in the sequel.
Given $l, m_1,\ldots, m_M \in \Z_{\ge 1}$ and 
$u,w_1,\ldots, w_M \in \C$, the transfer matrix $T_l(u)
=T_l\left(u\left|{m_1,\ldots, m_M  \atop w_1,\ldots, w_M}\right.\right)$
of the unrestricted $A^{(1)}_n$ face model with  
periodic boundary condition is a linear map on the space of independent 
row configurations on length $M$ row
$\bigoplus \C|a^{(1)},\ldots a^{(M)}\rangle$ where the 
sum is taken over $a^{(1)},\ldots a^{(M)} \in \mathcal{P}$ such that 
$a^{(i+1)}-a^{(i)} \in B_{m_i}\,(a^{(M+1)}=a^{(1)})$.
Its action is specified as 
$T_l(u)|b^{(1)},\ldots b^{(M)}\rangle 
= \sum_{a^{(1)},\ldots a^{(M)}}
T_l(u)_{b^{(1)},\ldots b^{(M)}}^{a^{(1)},\ldots a^{(M)}}
|a^{(1)},\ldots a^{(M)}\rangle$ in terms of the matrix elements
\begin{align}
T_l(u)_{b^{(1)},\ldots b^{(M)}}^{a^{(1)},\ldots a^{(M)}}
= \prod_{i=1}^M
W_{l,m_i}\Bigl({a^{(i)} \;\;  a^{(i+1)} 
\atop b^{(i)} \;\; b^{(i+1)}}\Bigl|  u-w_i\Bigr)
\qquad(a^{(M+1)}=a^{(1)}, b^{(M+1)}=b^{(1)}).
\end{align}
Theorem \ref{th:s} tells that 
$S_l :=T_l(u)_{u=w_1=\cdots = w_M}$
has a simple factorized matrix elements.
We write its elements as
$S_{l,b^{(1)},\ldots b^{(M)}}^{\phantom{l} a^{(1)},\ldots a^{(M)}}$.
The star-triangle relation (\ref{str}) implies the commutativity 
$[T_l(u), T_{l'}(u')]=[S_l, S_{l'}]=0$.

Let us consider whether 
$X=T_l(u)$ or $S_l$ admits an interpretation as a Markov matrix of a 
discrete time stochastic process.
The related issue was treated in \cite{Bo} for $n=1$ and mainly 
when $\min(l,m_1,\ldots, m_M)=1$.
One needs (i) sum-to-1 property  
$\sum_{a^{(1)},\ldots a^{(M)}}
X_{b^{(1)},\ldots b^{(M)}}^{a^{(1)},\ldots a^{(M)}}=1$
and (ii) nonnegativity 
$\forall X_{b^{(1)},\ldots b^{(M)}}^{a^{(1)},\ldots a^{(M)}}\ge 0$.
We concentrate on the trigonometric case in what follows.
From Theorem \ref{th:s} and the fact that 
$S_{l,m}\Bigl({a \;\; b \atop c \;\; d}\Bigr)_{\mathrm{trig}}$
in (\ref{st}) is 
independent of $a$, (i) indeed holds for $S_l$.
On the other hand (\ref{st}) also indicates that  
(ii) is not valid in general without confining the 
site variables in a certain range.
A typical such prescription is 
{\em restriction} \cite{DJKMO, JMO, JKMO}, where  
one takes $L=\ell+n+1$ in (\ref{bu}) with some $\ell\in \Z_{\ge 1}$
and lets the site variables range over the finite set of 
level $\ell$ dominant integral weights   
$\{(L+a_{n+1}-a_1-1)\Lambda_0 
+ \sum_{i=1}^n(a_i-a_{i+1}-1)\Lambda_i
\mid L+a_{n+1}> a_1> \cdots > a_{n+1}, a_i-a_j \in \Z \}$.
They are to obey a stronger adjacency condition \cite[p546, (c-2)]{JKMO}
than (\ref{abcd}) which is actually the fusion rule of the WZW conformal
field theory. 
(The formal limit $\ell \rightarrow \infty$ still works to
restrict the site variables to 
the positive Weyl chamber
and is called ``classically restricted".)
Then the star-triangle relation remains valid by virtue of nontrivial cancellation of unwanted terms.
However, discarding the contribution to the sum (\ref{s1})
from those $b$ not satisfying the adjacency condition
spoils the sum-to-1 property.
For example when $(n,l,m)=(2,1,2), a=(2,1,0), c=(4,2,0), d=(3,1,0)$ and 
$\ell$ is sufficiently large,
the unrestricted sum (\ref{s1}) consists of two terms 
$S_{l,m}\Bigl({a \;\;  b \atop d \;\; c}\Bigr)_{\mathrm{trig}}
=\binom{2}{1}_q^{-1}\frac{(q^{-1};q)_1}{(q^{-2};q)_1}$ for $b=(4,1,0)$
and
$S_{l,m}\Bigl({a \;\;  b' \atop d \;\; c}\Bigr)_{\mathrm{trig}}
=\binom{2}{1}_q^{-1}\frac{(q^{3};q)_1}{(q^{2};q)_1}$ for $b'=(3,2,0)$
summing up to 1, but 
$b'$ must be discarded in the restricted case
since $a \overset{m=2}{\Rightarrow} b'$ \cite[(c-2)]{JKMO}
does not hold.
Thus we see that in order to satisfy 
(i) and (ii) simultaneously one needs to resort to a construction
different from the restriction.

\section*{Acknowledgments}
The author thanks Masato Okado for discussion.
This work is supported by 
Grants-in-Aid for Scientific Research 
No.~15K13429 from JSPS.

\end{document}